\newtheorem{Theo}{Theorem}
\newtheorem{Lemm}{Lemma}
\begin{document}
% paper title
% can use linebreaks \\ within to get better formatting as desired
\title{On the Construction of Radio Environment Maps for Cognitive Radio Networks}
\author{\IEEEauthorblockN{Zhiqing~Wei\IEEEauthorrefmark{2}, Qixun~Zhang\IEEEauthorrefmark{1}, Zhiyong~Feng\IEEEauthorrefmark{1}}
\IEEEauthorblockA{Wireless Technology Innovation Institute\\
Beijing University of Posts and Telecommunications\\
Beijing 100876, P. R. China\\
Email: \IEEEauthorrefmark{2}zhiqingwei@gmail.com, \IEEEauthorrefmark{1}\{zhangqixun, fengzy\}@bupt.edu.cn} \and
\IEEEauthorblockA{Wei Li,~and~T. Aaron Gulliver\\
Department of Electrical and Computer Engineering\\
University of Victoria\\
Victoria, BC, Canada V8W 3P6\\
Email: weili@ieee.org, a.gullive@ece.uvic.ca}
\thanks{This work is accepted by WCNC 2013, supported by the National Basic Research Program (973 Program) of China (No. 2009CB320400), the
National Natural Science Foundation of China (61227801, 61201152, 61121001), the National Key Technology R\&D
Program of China (2012ZX03003006), the Program for New Century Excellent Talents in
University (NCET-01-0259).}}
% make the title area
\maketitle

\begin{abstract}
%\boldmath
The Radio Environment Map (REM) provides an effective approach to Dynamic Spectrum Access (DSA) in Cognitive Radio Networks (CRNs).
Previous results on REM construction show that there exists a tradeoff between the number of measurements (sensors) and REM accuracy.
In this paper, we analyze this tradeoff and determine that the REM error is a decreasing and convex function of the number of measurements (sensors).
The concept of geographic entropy is introduced to quantify this relationship.
And the influence of sensor deployment on REM accuracy is examined using information theory techniques.
The results obtained in this paper are applicable not only for the REM, but also for wireless sensor network deployment.\\
\end{abstract}

\begin{keywords}
Geographic Entropy, Spatial Radio Resource, Sensor Deployment, Radio Environment Map
\end{keywords}

% make the title area
\IEEEpeerreviewmaketitle

\section{Introduction}

Increases in the number of wireless communication systems has created a heterogeneous radio environment
where multiple Radio Access Technologies (RATs) coexist in the same time and space.
As a result, User Equipment (UE) with cognitive capabilities is crucial for flexible radio resource usage.
Mitola first proposed Cognitive Radio (CR) in 1998 as a context-aware radio technology that can be reconfigured to adapt to the environment \cite{CR_Mitola}.
The Radio Environment Map (REM) has been proposed as a database for dynamic spectrum access based on UE location and spectrum usage.
It contains multi-dimensional cognitive information such as geographic features, spectral regulations, equipment locations, radio activity logs,
user policies, and service providers \cite{CR_Performance_Evaluation}.

To build a REM, sensors (or UE) must be deployed to detect the radio environment.
The measurement data from the sensors is reported to an REM manager.
Several approaches have been employed for REM construction.
In \cite{REM_Fast_Algorithm}, Grimoud \emph{et al.} used an iterative process to obtain the REM based on Kriging interpolation
to reduce the measurement data required.
Riihij\"{a}rvi \emph{et al.} \cite{REM_spatial_statistics} developed a probabilistic model for the REM which
exploits the correlation in the measured data to reduce the complexity.
And Atanasovski \emph{et al.} \cite{REM_Heterogeneous_sensor} produced an REM prototype using heterogeneous spectrum sensors.

The goal of previous work on REM construction was to reduce the number of measurements required and improve REM accuracy.
There is a tradeoff between the number of measurements (or number of sensors), and the accuracy.
Faint \emph{et al.} \cite{REM_number_sensor_and_REM} examined this relationship using computer simulation,
and showed that increasing the sensor density can increase REM accuracy.
However, when the sensors are sufficiently dense, the improvement is not significant.
In this paper, we examine this tradeoff theoretically and determine that the relationship between the radio parameter error (REM accuracy),
is ${p_e} = \Theta (\frac{1}{{\sqrt M }})$, where $M$ is the number of sensors.
Besides, we obtain a closed form expression for $p_e$ as a function of $M$, which is a decreasing and convex function.
This verifies the simulation results in \cite{REM_number_sensor_and_REM}.
Converse to previous approaches, we build the REM by considering the coverage of all networks,
which is inspired by the Cognitive Pilot Channel (CPC) technology in \cite{CPC_Ondemand}.
Our results are not only applicable to REM construction, but also to deployment in wireless sensor networks (WSN).

The rest of this paper is organized as follows.
Sensor deployment and its relationship to REM construction is presented in Section II.
The analysis of this relationship is provided in Section III.
In Section IV, we examine the tradeoff between the number of sensors and REM accuracy.
Section V presents some numerical results, and finally some concluding remarks are given in Section VI.

\section{Sensor Deployment}

The region is divided into small meshes, which are shown as small squares in Fig. \ref{fig_system_model}.
Sensors are deployed over the entire region, and can be network detectors, spectrum sensing entities or just UE.
Two sensor deployment schemes are considered, one-mesh-one-sensor and random sensor deployment.
In the one-mesh-one-sensor scheme, a sensor is deployed in each mesh randomly.
Thus the number of sensors is equal to the number of meshes.
A sensor measurement is considered to be the radio environment for the entire mesh.
Thus after gathering all sensor measurements, the REM can be constructed (an example is shown in Fig. \ref{fig_picture_REM_construction}($\rm{a_1}$)).
With random sensor deployment, the sensors are randomly deployed in the region without regard for mesh boundaries.
In this case, the majority of the sensor measurements in a mesh determines the radio environment, and these values are
used to construct the REM for the region (examples are given in Fig. \ref{fig_picture_REM_construction}($\rm{b_1}$) and ($\rm{c_1}$)).

\begin{figure}[!t]
\centering
\includegraphics[width=0.35\textwidth]{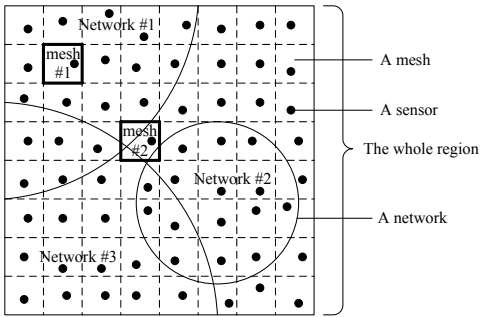}
\caption{The heterogeneous radio network distribution and sensor deployment for REM construction.} \label{fig_system_model}
\end{figure}

\section{REM Construction}

\subsection{REM parameters}

Define the binary representation of network $k$ at location $(x,y)$ as
\begin{equation}
\label{eq_binary_representation}
R{\rm{(}}k,x,y{\rm{)}} = \left\{ \begin{array}{l}
1 \mbox{ if network } k \mbox{ is detected at }(x,y)\\
0 \mbox{ otherwise}\\
\end{array} \right.
\end{equation}
Radio parameter at a location is characterized by the following sum of the binary representations for all networks \cite{CPC_Ondemand}
\begin{equation}\label{eq_radio_parameters}
I(x,y) = \sum \limits_{k = 1}^T {R(k,x,y) \times 2^{k - 1} }
\end{equation}
where $T$ is the number of networks.
%$I(x,y)$ in coordinates $(x,y)$ is defined as the \emph{radio parameter} of this spot.
The radio parameter of mesh $i$ is then
\begin{equation}\label{eq_radio_parameter_majority}
P = \mathop {\arg\max {p_{ij}}}\limits_j
\end{equation}
where $p_{ij}$ is the fraction of the area in mesh $i$ with radio parameter $j$,
and $\sum\nolimits_{j = 1}^N {{p_{ij}}} = 1$.
$N=2^T$ is the number of radio parameters.
In Fig. \ref{fig_system_model}, there are $8$ radio parameters and the radio parameter of mesh 2 is 0.

The radio parameter error (RPE) of mesh $i$ is defined as
\begin{equation}\label{eq_pe_definition}
{p_{e,i}} = 1 - \mathop {\max {p_{ij}}}\limits_j,
\end{equation}
and the RPE of the entire region is defined as
\begin{equation}
{p_e} = \sum\limits_{i = 1}^M {{\alpha _i}{p_{e,i}}}
\end{equation}
where $M$ is the number of meshes.
The RPE is not a continuous and smooth function of $p_{ij}$, thus we define the geographic entropy (GE) for convenience.
The geographic entropy of a mesh is defined as the corresponding uncertainty of the radio environment in this mesh.
In Fig. \ref{fig_system_model}, we are more certain about the radio environment in mesh \#1 than that in mesh \#2, since the radio environment
in mesh \#2 is more composite.
Similar to the Shannon entropy \cite{Cover}, the geographic entropy of mesh $i$ is defined as
\begin{equation}\label{GEmesh}
{H_i} =  - \sum\limits_{j = 1}^N {{p_{ij}}\log {p_{ij}}},
\end{equation}
and the geographic entropy of the entire region is defined as
\begin{equation}
H = \sum\limits_{i = 1}^M {{\alpha _i}{H_i}}  =  - \sum\limits_{i = 1}^M {{\alpha
_i}\sum\limits_{n = 1}^N {{p_{ij}}\log {p_{ij}}} }
\end{equation}
where $\alpha _i$ is the area fraction of mesh $i$ compared to the area of the entire region.
For a regular mesh division, such as Fig. \ref{fig_system_model}, ${\alpha _i} = \frac{1}{M}$ and
\begin{equation}\label{eq_entropy_rate}
H = \frac{1}{M}\sum\limits_{i = 1}^M {{H_i}}.
\end{equation}

\subsection{RPE and GE properties}
In this section, we investigate the geographic entropy and radio parameter error,
and the relationship between them.
\begin{Theo}
\label{th_jiangshang}
The geographic entropy of the entire region is ${{\rm O}}({\frac{1}{{\sqrt M }}} ) \to 0$,
where $M$ is the number of meshes.
\end{Theo}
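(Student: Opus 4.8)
The plan is to exploit the regular-mesh identity $H = \frac{1}{M}\sum_{i=1}^{M} H_i$ from (\ref{eq_entropy_rate}) and to show that the overwhelming majority of the meshes contribute nothing to the sum. First I would partition the meshes into two classes: \emph{interior} meshes, which lie entirely inside a single region of constant radio parameter, and \emph{boundary} meshes, which are crossed by a boundary separating two or more radio-parameter regions. For an interior mesh exactly one $p_{ij}$ equals $1$ and the remaining probabilities vanish, so by (\ref{GEmesh}) and the convention $0\log 0 = 0$ we get $H_i = 0$. These meshes drop out of the sum entirely, so $H$ is carried solely by the boundary meshes.

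Next I would bound the per-mesh entropy uniformly. Since the distribution $(p_{i1},\ldots,p_{iN})$ ranges over $N = 2^T$ radio parameters, the maximum-entropy property of the uniform distribution gives $H_i \le \log N = T\log 2$ for every mesh, and in particular for every boundary mesh. Crucially this bound is a fixed constant, independent of $M$.

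The crux is a geometric counting argument for the number of boundary meshes. Normalising the region to unit area, each of the $M$ meshes is a square of side $\delta = 1/\sqrt{M}$. Assuming the curves separating distinct radio-parameter regions are rectifiable with finite total length $L$, a standard box-counting estimate shows that such a curve can meet at most $O(L/\delta) = O(L\sqrt{M})$ grid squares: to pass from one square into an adjacent one the curve must traverse a distance of order $\delta$, so the number of squares visited grows like length over $\delta$. Hence the number of boundary meshes is $O(\sqrt{M})$. Combining the three steps yields
\begin{equation}
H = \frac{1}{M}\sum_{i=1}^{M} H_i \le \frac{1}{M}\cdot O(\sqrt{M})\cdot \log N = O\!\left(\frac{1}{\sqrt{M}}\right)\to 0 .
\end{equation}

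The hard part will be making the boundary-mesh count rigorous. This requires the mild regularity assumption that the inter-region boundaries have finite total length, and then a careful justification that a length-$L$ curve intersects only $O(\sqrt{M})$ cells of a side-$1/\sqrt{M}$ grid rather than, say, $O(M)$ of them. I expect this geometric lemma to be the sole genuine obstacle; once it is established, the two probabilistic facts — that interior meshes have zero entropy and that every mesh has entropy at most $\log N$ — combine with it immediately to give the stated $O(1/\sqrt{M})$ decay.
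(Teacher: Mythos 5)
Your proposal is correct and follows essentially the same route as the paper: pure (interior) meshes contribute zero entropy, every boundary mesh is bounded by $\log N$, and the number of boundary meshes is $O(\sqrt{M})$, giving $H \le \frac{1}{M}\,O(\sqrt{M})\log N = O(1/\sqrt{M})$. The only cosmetic difference is how the mesh count is justified: the paper dilates the boundary curve by $\sqrt{2}\varepsilon$ in both normal directions and divides the resulting strip area $2\sqrt{2}\xi\varepsilon$ by the mesh area, whereas you invoke a box-counting traversal argument --- two standard phrasings of the same geometric lemma.
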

\begin{proof}
%When $M$ is large,
The meshes with an impure radio environment are distributed along the network boundaries (shown as a solid curve in Fig. \ref{fig_packing}).
Denote the length of all these boundaries as $\xi$, the length of a mesh edge as $\varepsilon$, and the length of the region edge as $L$.
Then we have $M = {( {\frac{L}{\varepsilon }} )^2}$, and
the number of meshes with impure radio environment is upper bounded by
\begin{equation}\label{eq_K_upper_bound}
K \le \frac{{2\xi \sqrt 2 \varepsilon }}{{{\varepsilon ^2}}} = \frac{{2\sqrt 2 \xi}}{\varepsilon }.
\end{equation}
This result is obtained by considering the corresponding packing problem along the boundary, which is
shown as a solid curve in Fig. \ref{fig_packing}.
Moving each point on this line in the two normal directions a distance $\sqrt 2 \varepsilon$ gives the two dotted lines.
The area between these lines is $2\xi \sqrt 2 \varepsilon$.
All the meshes with an impure radio environment are located between these dotted lines, so
an upper bound on $K$ is $2\xi \sqrt 2 \varepsilon$ divided by the area of a mesh.
An upper bound on the geographic entropy is then
\begin{equation}
H \le \frac{1}{M}K\log N \le \frac{1}{M}\frac{{2\sqrt 2 \xi }}{\varepsilon }\log N =
\frac{1}{{\sqrt M }}\frac{{2\sqrt 2 \xi \log N}}{{\sqrt S }},
\end{equation}
so that ${\rm O} \left( {\frac{1}{{\sqrt M }}} \right) \to 0$ is an upper bound on $H$.
\end{proof}

\begin{figure}[!t]
\centering
\includegraphics[width=0.35\textwidth]{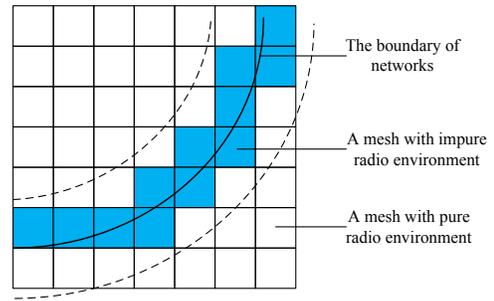}
\caption{The boundary used to determine an upper bound on $K$.} \label{fig_packing}
\end{figure}

\begin{Theo}\label{TH_pe_upper}
The RPE of the entire region is ${{\rm O}}\left( {\frac{1}{{\sqrt M }}} \right) \to 0$, where $M$ is the number of meshes.
\end{Theo}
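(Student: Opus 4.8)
The plan is to reduce this statement to Theorem~\ref{th_jiangshang} by establishing a clean pointwise domination of the radio parameter error by the geographic entropy, mesh by mesh. Concretely, I would prove that for every mesh $i$,
\begin{equation}
p_{e,i} \le H_i,
\end{equation}
with the entropy taken in the natural logarithm. Once this is in hand, summing against the area weights $\alpha_i \ge 0$ gives $p_e = \sum_i \alpha_i p_{e,i} \le \sum_i \alpha_i H_i = H$, and Theorem~\ref{th_jiangshang} immediately yields $p_e \le H = \mathrm{O}(1/\sqrt{M}) \to 0$. This route is attractive because it is exactly what justifies introducing the geographic entropy in the first place: $p_{e,i}$ is not smooth, but it is dominated by the smooth surrogate $H_i$, so all the asymptotic work already done for $H$ transfers for free.

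The pointwise bound I would split into two elementary steps. First, write $p^{\ast} = \max_j p_{ij}$ so that $p_{e,i} = 1 - p^{\ast}$, and note that $\log p_{ij} \le \log p^{\ast}$ for every $j$; multiplying by $p_{ij} \ge 0$ and summing over $j$ gives the min-entropy bound
\begin{equation}
H_i = -\sum_{j=1}^{N} p_{ij}\log p_{ij} \ge -\log p^{\ast} = \log\frac{1}{1 - p_{e,i}}.
\end{equation}
Second, apply the scalar inequality $-\log(1-x) \ge x$ for $x \in [0,1)$ (immediate from the series expansion, or from concavity of $\log$) at $x = p_{e,i}$ to conclude $H_i \ge p_{e,i}$. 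The only point requiring care is the logarithm base: $-\log(1-x)\ge x$ is exact for the natural logarithm, so I would fix that convention for $H_i$; any other fixed base changes only a constant and is harmless inside $\mathrm{O}(1/\sqrt{M})$.

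I do not expect a serious obstacle, since the heavy geometric lifting---the packing estimate bounding the number of impure meshes $K$---is already carried out in Theorem~\ref{th_jiangshang}. The one thing to get right is the \emph{direction} of the entropy inequality: it is the lower bound $H_i \ge \log(1/p^{\ast})$ (Shannon entropy dominating min-entropy) that is needed, and it is easy to reach for the familiar upper bound $H_i \le \log N$ by reflex. As a consistency check I would also run the argument directly, bypassing entropy: a pure mesh has $p_{e,i}=0$, while any impure mesh has $p_{e,i}\le 1$, so $p_e \le K/M$ reproduces the same $\mathrm{O}(1/\sqrt{M})$ rate through the identical computation as in Theorem~\ref{th_jiangshang} with $\log N$ replaced by $1$. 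Agreement of the two derivations is a good sanity test on the constant.
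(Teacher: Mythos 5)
Your proposal is correct, and it takes a genuinely different route from the paper. The paper never passes through the entropy: it observes that pure meshes contribute zero error, bounds each impure mesh by $p_{e,i} \le 1 - \frac{1}{N}$ (since $\max_j p_{ij} \ge \frac{1}{N}$), and multiplies by the packing bound $K \le \frac{2\sqrt{2}\xi}{\varepsilon}$ recycled from inside the proof of Theorem~\ref{th_jiangshang}, giving
\begin{equation}
p_e \le \frac{1}{M}K\left(1 - \frac{1}{N}\right) \le \frac{1}{\sqrt{M}}\,\frac{2\sqrt{2}\xi L}{S}\left(1 - \frac{1}{N}\right).
\end{equation}
Your ``consistency check'' ($p_e \le K/M$, the same computation with constant $1$ in place of $1-\frac{1}{N}$) is essentially the paper's proof with a marginally looser constant. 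Your main route instead establishes the pointwise domination $p_{e,i} \le H_i$ (in nats) via the min-entropy bound $H_i \ge -\log p^{\ast}$ and $-\ln(1-x) \ge x$, both of which are valid, and then consumes Theorem~\ref{th_jiangshang} as a black box. The trade-off: the paper's direct counting yields the sharper explicit constant $\bigl(1-\frac{1}{N}\bigr)$ (versus $\log N$ implicit in the entropy route), which is reused later to derive the sensor-count bound $M_1$ in (\ref{eq_lower_bound_sensors}), and it is insensitive to the logarithm base; your route is more conceptual, exhibiting the elementary reverse-direction companion to the Fano and Feder--Merhav relations the paper invokes in Theorems~\ref{th_relation} and~\ref{th_relation_lowerbound}, and it only needs the \emph{statement} of Theorem~\ref{th_jiangshang}, whereas the paper's proof of this theorem is not self-contained (it reaches back into the earlier proof for the bound on $K$). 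You also correctly flag the one genuine care point, namely that $-\ln(1-x)\ge x$ requires the natural logarithm, with any fixed base costing only a constant that is harmless inside ${\rm O}(1/\sqrt{M})$.
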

\begin{proof}
From $1 - {p_{e,i}} = \mathop {\max {p_{ij}}}\limits_j  \ge \frac{1}{N}$, we have that
\begin{equation}\label{eq_pei_upper_bound}
{p_{e,i}} \le 1 - \frac{1}{N}.
\end{equation}
The RPE of the entire region is then upper bounded by
\begin{equation}\label{eq_RPE_upper_bound}
{p_e} \le \frac{1}{M}K\left( {1 - \frac{1}{N}} \right) \le \frac{1}{{\sqrt M }}\frac{{2\sqrt 2 \xi L}}{S}\left( {1 - \frac{1}{N}} \right)
\end{equation}
which gives the required result.
\end{proof}

Theorems \ref{th_jiangshang} and \ref{TH_pe_upper} show the scaling of the GE and RPE as a function of $M$.
The relationship between these parameters is given in the following theorem.
\begin{Theo}\label{th_relation}
The geographic entropy as a function of the radio parameter error is upper bounded by
\begin{equation}\label{origion}
H \le H(p_e) + {p_e}\log \left| {N - 1} \right| \buildrel \Delta \over = \psi ({p_e})
\end{equation}
\end{Theo}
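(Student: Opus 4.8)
The plan is to recognize inequality~\eqref{origion} as an aggregated form of Fano's inequality \cite{Cover}. The key observation is that within a single mesh $i$ the radio parameter may be regarded as a random variable $X_i$ taking one of $N$ values according to the distribution $\{p_{ij}\}_{j=1}^N$, whose entropy is exactly $H_i$ in~\eqref{GEmesh}. The majority rule in~\eqref{eq_radio_parameter_majority}, which assigns the mesh its most likely parameter $\hat{j} = \arg\max_j p_{ij}$, is then a deterministic MAP estimate of $X_i$, and by~\eqref{eq_pe_definition} the probability that this estimate errs is precisely $p_{e,i} = 1 - \max_j p_{ij}$.

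First I would establish the per-mesh bound $H_i \le H(p_{e,i}) + p_{e,i}\log(N-1)$, where $H(\cdot)$ denotes the binary entropy function. Introduce the error indicator $E_i = \mathbf{1}[X_i \neq \hat{j}]$ and expand the joint entropy $H(E_i,X_i)$ two ways by the chain rule. Since $E_i$ is a deterministic function of $X_i$ (for the fixed value $\hat{j}$), one expansion equals $H(X_i)=H_i$, while the other equals $H(E_i) + H(X_i\mid E_i)$. Conditioned on no error ($E_i=0$) the value of $X_i$ is determined and contributes zero; conditioned on an error ($E_i=1$) the variable $X_i$ ranges over at most $N-1$ values and so contributes at most $p_{e,i}\log(N-1)$. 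Combining this with $H(E_i)=H(p_{e,i})$ yields the per-mesh inequality.

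Next I would lift the result to the whole region by taking the $\alpha_i$-weighted average. Summing the per-mesh bounds gives
\begin{equation*}
H = \sum_{i=1}^M \alpha_i H_i \le \sum_{i=1}^M \alpha_i H(p_{e,i}) + \left(\sum_{i=1}^M \alpha_i p_{e,i}\right)\log(N-1).
\end{equation*}
The second term is exactly $p_e\log(N-1)$ by the definition of $p_e$. For the first term I would invoke the concavity of the binary entropy function together with $\sum_i \alpha_i = 1$, so that Jensen's inequality gives $\sum_i \alpha_i H(p_{e,i}) \le H\!\left(\sum_i \alpha_i p_{e,i}\right) = H(p_e)$. Adding the two pieces yields $H \le H(p_e) + p_e\log(N-1) = \psi(p_e)$, as required.

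The per-mesh step is the standard derivation of Fano's inequality and becomes routine once the probabilistic interpretation is in place. I expect the aggregation step to be where genuine care is needed: the target inequality is phrased in terms of the \emph{average} error $p_e$, not the individual $p_{e,i}$, and it is precisely the concavity of $H(\cdot)$, applied through Jensen's inequality, that lets the convex combination of binary-entropy terms collapse into the single term $H(p_e)$. Without concavity the per-mesh bounds would not consolidate into the clean closed form $\psi(p_e)$.
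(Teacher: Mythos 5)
Your proof is correct and takes essentially the same route as the paper: Fano's inequality applied to each mesh, followed by Jensen's inequality to aggregate the per-mesh bounds into a bound at the region level. The only cosmetic differences are that you derive the per-mesh Fano bound from first principles rather than citing it, and you apply Jensen only to the binary-entropy term (the $p_{e,i}\log(N-1)$ term being linear and averaging exactly), whereas the paper applies Jensen directly to the whole concave function $\psi(x) = H(x) + x\log(N-1)$ --- an equivalent step, since the concavity of $\psi$ is exactly the concavity of $H(\cdot)$ plus linearity.
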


\begin{proof}
According to Fano's inequality, we have
\begin{equation}
{H_i} \le H({p_{e,i}}) + {p_{e,i}}\log \left| {N - 1} \right| \buildrel \Delta \over = \psi ({p_{e,i}})
\end{equation}
where $H_i$ is the entropy of mesh $i$.
Taking the sum gives
\begin{equation}
H = \frac{1}{M}\sum\limits_{i = 1}^M {{H_i}}  \le \frac{1}{M}\sum\limits_{i = 1}^M {\psi ({p_{e,i}})} \mathop  \le \limits^{(a)} \psi \left( {\frac{1}{M}\sum\limits_{i = 1}^M {{p_{e,i}}} } \right),
\end{equation}
where $(a)$ is due to Jensen's inequality from the concavity of $\psi (x)$.
Using ${p_e} = \frac{1}{M}\sum\limits_{i = 1}^M {{p_{e,i}}}$, the proof is complete.
\end{proof}
The following information theoretic lower bound on the entropy as a function of the error probability was given by Feder and Merhav \cite{Meir}.
\begin{Lemm}[\cite{Meir}]\label{lemma_meir}
A lower bound on the entropy $h$ as a function of the error probability $\pi$ is given by $h \ge {\phi}(\pi)$ where
\begin{equation}
\begin{array}{l}
 {\phi}(\pi ) \\
  = \left\{ {\begin{array}{*{20}{c}}
   {{a_1}\pi  + {b_1}} \hfill & {0 \le \pi  \le \frac{1}{2}} \hfill  \\
   {{a_2}(\pi  - \frac{1}{2}) + {b_2}} \hfill & {\frac{1}{2} \le \pi  \le \frac{2}{3}} \hfill  \\
    \vdots  \hfill &  \vdots  \hfill  \\
   {{a_i}(\pi  - \frac{{i - 1}}{i}) + {b_i}} \hfill & {\frac{{i - 1}}{i} \le \pi  \le \frac{i}{{i + 1}}} \hfill  \\
    \vdots  \hfill &  \vdots  \hfill  \\
   {{a_{N - 1}}(\pi  - \frac{{N - 2}}{{N - 1}}) + {b_{N - 1}}} \hfill & {\frac{{N - 2}}{{N - 1}} \le \pi  \le \frac{{N - 1}}{N}} \hfill  \\
\end{array}} \right. \\
 \end{array}
\end{equation}
with ${a_i} = i(i + 1)\log \left( {\frac{{i + 1}}{i}} \right)$ and ${b_i} = \log i$.
\end{Lemm}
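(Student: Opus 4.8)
The plan is to recognize $\phi(\pi)$ as the piecewise-linear chord through the sequence of knots $\left(\frac{i-1}{i}, \log i\right)$, $i = 1,\dots,N$, and then to show that the least entropy compatible with error probability $\pi$ never falls below this chord. First I would define $F(\pi) = \min\{ h(p) : p \in \Delta_N,\ 1 - \max_j p_j = \pi\}$, the minimum of the entropy over all distributions whose maximal mass equals $1-\pi$ (so the MAP error probability is exactly $\pi$). Since any admissible random variable has entropy at least $F(\pi)$, it suffices to prove $F(\pi) \ge \phi(\pi)$, and the problem reduces to understanding $F$ on each interval $\left[\frac{i-1}{i}, \frac{i}{i+1}\right]$.

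Second, I would characterize the entropy-minimizing distribution. With the largest mass pinned at $p_{\max} = 1-\pi$, the tail entropy $\sum_j(-p_j\log p_j)$ is minimized by making the remaining coordinates as unequal as the cap $p_j \le 1-\pi$ allows. Because $-p\log p$ is concave, transferring mass so as to saturate coordinates at the cap $1-\pi$ (or drive them to $0$) never increases the sum. A two-point exchange (or Schur-concavity) argument then shows the minimizer places $i$ coordinates at the cap $1-\pi$ and a single leftover coordinate at $1 - i(1-\pi)$, which is feasible precisely because $\pi \in \left[\frac{i-1}{i}, \frac{i}{i+1}\right]$ forces $0 \le 1 - i(1-\pi) \le 1-\pi$. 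Evaluating at the endpoints recovers the uniform distributions on $i$ and $i+1$ symbols, so $F\!\left(\frac{i-1}{i}\right) = \log i$ and $F\!\left(\frac{i}{i+1}\right) = \log(i+1)$, which are exactly the knots of $\phi$.

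Third, I would show $F$ is concave on each interval, so the chord joining its endpoints lies below it. Writing $q(\pi) = 1 - i(1-\pi)$, a direct differentiation gives $F'(\pi) = i\log\frac{1-\pi}{q(\pi)}$ and $F''(\pi) = -i\left(\frac{1}{1-\pi} + \frac{i}{q(\pi)}\right) < 0$, confirming concavity. The chord through the two endpoints has intercept $b_i = \log i$ and slope $\frac{\log(i+1)-\log i}{\frac{i}{i+1}-\frac{i-1}{i}} = i(i+1)\log\frac{i+1}{i} = a_i$, so this chord is precisely the $i$-th branch of $\phi$. Concavity yields chord $\le F$, hence $\phi(\pi) \le F(\pi) \le h$, which is the claim.

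The main obstacle is the second step: rigorously fixing the entropy-minimizing distribution under the simultaneous constraints that the maximum equals $1-\pi$, every coordinate is capped by $1-\pi$, and the masses sum to one. The cleanest route is to invoke Schur-concavity of entropy and verify that the claimed distribution majorizes every admissible competitor, or equivalently to run an explicit exchange argument that repeatedly moves probability between two interior coordinates until one hits the cap $1-\pi$ or vanishes; throughout, one must check feasibility of the leftover coordinate, i.e. that $0 \le 1 - i(1-\pi) \le 1-\pi$ holds across the whole interval. Once the minimizer is identified, the concavity estimate and the chord identification are routine.
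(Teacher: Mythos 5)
Your proof is correct, but note that the paper never proves this lemma: it is quoted directly from Feder and Merhav (reference \cite{Meir}), so there is no internal proof to compare against, and your self-contained derivation essentially reconstructs the original argument for the unconditional case. The structure is sound: the entropy minimizer subject to $\max_j p_j = 1-\pi$ is the majorization-maximal vector with $i$ coordinates saturated at $1-\pi$ and one leftover coordinate $q(\pi) = 1 - i(1-\pi)$, and the Schur-concavity step you flag as the main obstacle is easily closed --- any admissible $p$ has all coordinates at most $1-\pi$, so its $k$-th largest partial sum is at most $\min\{k(1-\pi),\,1\}$, which is exactly the partial sum of your candidate, hence the candidate majorizes every competitor and Schur-concavity of entropy finishes it. Your feasibility check $0 \le q(\pi) \le 1-\pi$ on $\left[\frac{i-1}{i},\frac{i}{i+1}\right]$, the concavity computation $F''(\pi) = -i\left(\frac{1}{1-\pi}+\frac{i}{q(\pi)}\right) < 0$, and the chord identification (slope $a_i = i(i+1)\log\frac{i+1}{i}$, intercept $b_i = \log i$) are all correct. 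Two remarks worth adding. First, your intermediate bound $h \ge F(\pi)$ is strictly stronger than the lemma's $h \ge \phi(\pi)$; the piecewise-linear relaxation matters in Feder--Merhav because their general result concerns conditional entropy versus average error probability, where only the lower convex envelope of $F$ survives averaging over observations --- for the per-mesh, unconditional use made in this paper, your pointwise bound suffices. Second, the paper's Theorem 4 later relies on $\phi$ being increasing and convex (for Jensen's inequality); this follows from the slopes $a_i$ being positive and increasing in $i$, which your chord computation yields for free but which you should state explicitly if your proof is to support that later use.
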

From \cite{Meir}, ${\phi}(\pi )$ is an monotone increasing and convex function of $\pi$
(see Fig. 1 in \cite{Meir}, where $\phi^*$ corresponds to $\phi$ in this paper).
Based on Lemma \ref{lemma_meir}, we have the following theorem.
\begin{Theo}\label{th_relation_lowerbound}
A lower bound on the geographic entropy of the entire region as a function of the radio parameter error $p_e$ is given by
\begin{equation}\label{relation_lowerbound}
H \ge {\phi}({p_e}).
\end{equation}
\end{Theo}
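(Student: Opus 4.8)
The plan is to mirror the structure of the proof of Theorem \ref{th_relation}, but to exploit the \emph{convexity} of $\phi$ in the direction opposite to the one used there for the concave function $\psi$. First I would apply Lemma \ref{lemma_meir} meshwise. Since $H_i$ is the entropy of mesh $i$ and $p_{e,i}$ is its radio parameter error, the Feder--Merhav bound gives $H_i \ge \phi(p_{e,i})$ for every $i$. Before doing so, I would check that each argument lies in the admissible range: by \eqref{eq_pei_upper_bound} we have $p_{e,i} \le 1 - \frac{1}{N} = \frac{N-1}{N}$, so every $p_{e,i}$ falls in $[0,\frac{N-1}{N}]$, which is precisely the domain on which the piecewise-linear $\phi$ is defined.

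Next I would average these $M$ inequalities with the mesh weights. For a regular mesh division, \eqref{eq_entropy_rate} gives $H = \frac{1}{M}\sum_{i=1}^{M} H_i$, so summing the meshwise bounds yields
\[
H = \frac{1}{M}\sum_{i=1}^{M} H_i \ge \frac{1}{M}\sum_{i=1}^{M}\phi(p_{e,i}).
\]
The key step is then Jensen's inequality. Because $\phi$ is convex (as noted from \cite{Meir}), the average of $\phi$ over the values $p_{e,i}$ dominates $\phi$ evaluated at the average argument,
\[
\frac{1}{M}\sum_{i=1}^{M}\phi(p_{e,i}) \ge \phi\!\left(\frac{1}{M}\sum_{i=1}^{M} p_{e,i}\right).
\]
Finally, substituting $p_e = \frac{1}{M}\sum_{i=1}^{M} p_{e,i}$ identifies the right-hand side as $\phi(p_e)$, which closes the chain $H \ge \phi(p_e)$ and completes the proof.

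I expect the only delicate point to be the direction of the Jensen step: it is the reverse of the one invoked for $\psi$ in Theorem \ref{th_relation}, and it hinges essentially on $\phi$ being convex rather than concave, so I would take care to cite the convexity of $\phi$ explicitly rather than reuse the earlier concavity argument verbatim. Everything else is bookkeeping — the meshwise application of Lemma \ref{lemma_meir} and the reweighting by $\alpha_i = \frac{1}{M}$ — and requires no further computation.
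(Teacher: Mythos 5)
Your proposal is correct and follows essentially the same route as the paper's own proof: apply Lemma \ref{lemma_meir} meshwise to get $H_i \ge \phi(p_{e,i})$, average over meshes using \eqref{eq_entropy_rate}, and then use Jensen's inequality with the convexity of $\phi$ to pull the average inside, yielding $H \ge \phi(p_e)$. Your additional check that each $p_{e,i}$ lies in $[0,\tfrac{N-1}{N}]$ via \eqref{eq_pei_upper_bound} is a small extra point of rigor not spelled out in the paper, but it does not change the argument.
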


\begin{proof}
In Lemma \ref{lemma_meir}, let $h = H_i$ and $\pi = p_{e,i}$ so that
\begin{equation}
{H_i} \ge \phi ({p_{e,i}}).
\end{equation}
The geographic entropy is then
\begin{equation}
H = \frac{1}{M}\sum\limits_{i = 1}^M {{H_i}}  \ge \frac{1}{M}\sum\limits_{i = 1}^M {\phi ({p_{e,i}})}.
\end{equation}
Since $\phi (p_{e,i} )$ is a convex function of $p_{e,i}$, a lower bound on $H$ is given by
\begin{equation}
H \ge \frac{1}{M}\sum\limits_{i = 1}^M {{\phi }({p_{e,i}})} \mathop  \ge \limits^{(b)} {\phi }(\frac{1}{M}\sum\limits_{i = 1}^M {{p_{e.i}}} ) = {\phi }({p_e})
\end{equation}
where $(b)$ is due to Jensen's inequality.
\end{proof}
Combining Theorems \ref{th_relation} and \ref{th_relation_lowerbound}, we have
\begin{equation}
\begin{array}{*{20}{c}}
   {\phi \left( {{p_e}} \right) \le H \le \psi \left( {{p_e}} \right)}  \\
   {{\psi ^{ - 1}}\left( {{H}} \right) \le {p_e} \le {\phi ^{ - 1}}\left( {{H}} \right)}  \\
\end{array}
\end{equation}
Thus the geographic entropy is related to the radio parameter error,
and \emph{an increase of (a reduction of) the geographic entropy may increase (reduce) the radio parameter error}.

The mesh configuration will also affect geographic entropy and the RPE, as shown in the following theorem.
\begin{Theo}
If any two meshes are fused, the entropy of the entire region will not decrease.
\end{Theo}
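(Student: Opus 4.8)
The plan is to reduce the statement to the concavity of the Shannon entropy as a function of the underlying distribution. Without loss of generality I would fuse meshes $1$ and $2$ into a single mesh, leaving meshes $3,\dots,M$ untouched. Since those untouched meshes contribute the identical terms $\alpha_i H_i$ to $H = \sum_i \alpha_i H_i$ both before and after the operation, it suffices to show that the contribution of the single fused mesh is no smaller than the combined contribution $\alpha_1 H_1 + \alpha_2 H_2$ of the two original meshes.

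First I would describe the radio-parameter distribution of the fused mesh. Its area fraction is $\alpha_1 + \alpha_2$, and the fraction of its area carrying radio parameter $j$ is the area-weighted average
\[
q_j = \frac{\alpha_1 p_{1j} + \alpha_2 p_{2j}}{\alpha_1 + \alpha_2} = \lambda p_{1j} + (1-\lambda) p_{2j}, \qquad \lambda = \frac{\alpha_1}{\alpha_1 + \alpha_2} \in (0,1).
\]
Thus the fused distribution $\{q_j\}_{j=1}^N$ is exactly a convex combination of the two original mesh distributions $\{p_{1j}\}$ and $\{p_{2j}\}$ with weight $\lambda$, and the fused entropy is $H_{\mathrm{f}} = -\sum_{j=1}^N q_j \log q_j$.

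Then I would invoke the concavity of the entropy functional $-\sum_j p_j \log p_j$ in the distribution (equivalently, the fact that conditioning reduces entropy, viewing the ``which of the two meshes'' label as the conditioning variable). This yields
\[
H_{\mathrm{f}} \ge \lambda H_1 + (1-\lambda) H_2.
\]
Multiplying through by $\alpha_1 + \alpha_2$ and using $\lambda(\alpha_1+\alpha_2)=\alpha_1$ gives
\[
(\alpha_1 + \alpha_2)\, H_{\mathrm{f}} \ge \alpha_1 H_1 + \alpha_2 H_2,
\]
and adding the unchanged sum $\sum_{i=3}^M \alpha_i H_i$ to both sides shows $H' \ge H$, where $H'$ is the region entropy after fusion.

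The part I expect to carry the real content — rather than present a genuine obstacle — is the bookkeeping of the area fractions: one must verify that the fused distribution is the \emph{area}-weighted mixture above, and that the explicit factor $(\alpha_1+\alpha_2)$ in front of $H_{\mathrm{f}}$ is precisely what lets the local concavity inequality lift to the region-wide quantity $H=\sum_i \alpha_i H_i$. One should also note that the area fractions remain a valid normalized weighting after the merge, since $\sum_i \alpha_i = 1$ is preserved when two fractions are combined into one. The concavity of entropy itself is standard \cite{Cover}, so once this weighting is set up the conclusion is immediate.
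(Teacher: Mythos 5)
Your proposal is correct and follows essentially the same route as the paper's own proof: identify the fused mesh's distribution as the area-weighted convex combination of the two original distributions, apply concavity of the Shannon entropy, rescale by the combined area weight, and add back the unchanged terms. The only cosmetic difference is that you work with area fractions $\alpha_i$ while the paper works with raw areas $s_i$ and divides by the total area $S$, which is equivalent.
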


\begin{proof}
Assume meshes 1 and 2 are fused.
The radio parameter distribution of mesh $i$ is ${p_{i1}},{p_{i2}}, \ldots ,{p_{iN}}$, and the area of mesh $i$ is $s_i$.
The radio parameter distribution of the fused mesh is
\begin{equation}
\begin{aligned}
& \{ {{p'}_1},{{p'}_2}, \cdots ,{{p'}_N}\}  =  \\
& \left\{ \frac{{{s_1}{p_{11}} + {s_2}{p_{21}}}}{{{s_1} + {s_2}}},\frac{{{s_1}{p_{12}} +
{s_2}{p_{22}}}}{{{s_1} + {s_2}}}, \cdots ,\frac{{{s_1}{p_{1N}} + {s_2}{p_{2N}}}}{{{s_1} +
{s_2}}}\right\}
\end{aligned}
\end{equation}
As the entropy is concave \cite{Cover}, we have
\begin{equation}
\begin{aligned}
& \frac{{{s_1}}}{{{s_1} + {s_2}}}H({p_{11}},{p_{12}}, \ldots ,{p_{1N}}) + \frac{{{s_2}}}{{{s_1} + {s_2}}}H({p_{21}},{p_{22}}, \ldots ,{p_{2N}}) \\
&  \le H\left(\frac{{{s_1}{p_{11}} + {s_2}{p_{21}}}}{{{s_1} + {s_2}}},\frac{{{s_1}{p_{12}} +
{s_2}{p_{22}}}}{{{s_1} + {s_2}}}, \ldots ,\frac{{{s_1}{p_{1N}} + {s_2}{p_{2N}}}}{{{s_1} + {s_2}}}\right),
\end{aligned}
\end{equation}
so that
\begin{equation}\label{budengshi}
{s_1}{H_1} + {s_2}{H_2} \le ({s_1} + {s_2})H({{p'}_1},{{p'}_2}, \ldots, {{p'}_N}).
\end{equation}
Before fusion, the entropy of the entire region is
\begin{equation}
H = \frac{{{s_1}{H_1} + {s_2}{H_2}}}{S} + \frac{{\sum\limits_{i = 3}^M {{H_i}} }}{S}
\end{equation}
where $S$ is the region area.
After fusion, this entropy is
\begin{equation}
H' = \frac{{({s_1} + {s_2})H({{p'}_1},{{p'}_2}, \ldots, {{p'}_N})}}{S} +
\frac{{\sum\limits_{i = 3}^M {{H_i}} }}{S}.
\end{equation}
From (\ref{budengshi}), $H \le H'$, so the entropy of the region is not decreased after fusion.
\end{proof}

Duality provides the following theorem.
\begin{Theo}\label{th_mesh_division_reduce_entropy}
Any mesh division operation will not increase the entropy or the radio parameter error of the region.
\end{Theo}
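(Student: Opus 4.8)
The plan is to exploit the duality with the preceding fusion theorem: a mesh division is exactly the inverse of a mesh fusion, so the inequality already established can simply be read backwards. First I would reduce the general statement to the elementary case of splitting a single mesh into two submeshes. Any division operation can be realized as a finite sequence of such binary splits, so if neither the entropy nor the RPE increases under one split, the same holds for the full refinement by iterating.

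For the entropy claim I would set up the two configurations that differ only in the mesh being split: the \emph{divided} configuration containing submeshes $1$ and $2$ with areas $s_1, s_2$ and distributions $\{p_{1j}\}, \{p_{2j}\}$, and the \emph{fused} configuration containing the single mesh of area $s_1 + s_2$ with the area-weighted distribution $p'_j = (s_1 p_{1j} + s_2 p_{2j})/(s_1 + s_2)$. All other meshes are untouched, so the divided-region entropy equals the fused-region entropy with the contribution $(s_1+s_2)H(p'_1,\ldots,p'_N)/S$ replaced by $(s_1 H_1 + s_2 H_2)/S$. These two quantities are precisely the two sides of inequality (\ref{budengshi}), which gives $s_1 H_1 + s_2 H_2 \le (s_1 + s_2) H(p'_1,\ldots,p'_N)$. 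Hence the divided-region entropy is no larger than the fused-region entropy, i.e. division does not increase the entropy.

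The radio parameter error needs a separate but parallel argument, since the fusion theorem as stated concerns only entropy. Here I would compare the weighted RPE contributions of the two submeshes against that of the fused mesh, which after cancelling the common terms $s_1 + s_2$ reduces the claim to $(s_1 + s_2)\max_j p'_j \le s_1 \max_j p_{1j} + s_2 \max_j p_{2j}$. This holds because the map $\mathbf{p} \mapsto \max_j p_j$ is convex: evaluating it at the convex combination $p'_j = (s_1 p_{1j} + s_2 p_{2j})/(s_1+s_2)$ yields a value bounded above by the corresponding convex combination of $\max_j p_{1j}$ and $\max_j p_{2j}$. Rearranging shows the divided RPE is at most the fused RPE, completing this part.

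The main obstacle, such as it is, is conceptual rather than computational: recognizing that the entropy half is essentially free from the preceding fusion theorem, whereas the RPE half is not covered by it and must be handled through convexity of the maximum. One must be careful that the maximum is applied coordinatewise to the mixed distribution $\{p'_j\}$ rather than to any fixed coordinate, and that the area weights are tracked consistently so that the divided and fused configurations genuinely agree outside the single mesh being split.
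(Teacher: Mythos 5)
Your proposal is correct, and it is actually more complete than what the paper provides. The paper proves only the fusion theorem (for entropy, via concavity) and then asserts the division theorem with the single word ``duality,'' offering no argument at all; read strictly, that duality covers only the entropy half of the statement, since the fusion theorem says nothing about the radio parameter error. Your entropy argument is exactly the paper's implicit one: a binary split and a fusion are inverse operations, so inequality (\ref{budengshi}) read backwards shows the divided configuration has entropy no larger than the fused one, and a general division is a finite sequence of binary splits. Where you go beyond the paper is in recognizing that the RPE claim needs its own proof and supplying it: since
\begin{equation}
\max_j \frac{s_1 p_{1j} + s_2 p_{2j}}{s_1+s_2} \;\le\; \frac{s_1 \max_j p_{1j} + s_2 \max_j p_{2j}}{s_1+s_2},
\end{equation}
by convexity of the coordinatewise maximum, the area-weighted sum of $1-\max_j p_{ij}$ over the two submeshes is at most the corresponding term for the fused mesh, so division does not increase the region RPE. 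This is the right argument, and note it is needed elsewhere in the paper: the discussion after equation (\ref{eq_random_sensor_density}) invokes precisely the RPE half of this theorem (division of a multi-sensor mesh does not increase the error), so your addition closes a gap in the paper's logic rather than merely duplicating it.
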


Theorem \ref{th_mesh_division_reduce_entropy} shows that if some meshes are divided into smaller meshes
(such as the meshes with composite radio propagation environment), and
the one-mesh-one-sensor scheme is adopted, then the geographic entropy as well as the RPE can be reduced.
We next examine the tradeoff between the number of sensors and the REM accuracy.

\section{REM Construction Tradeoffs}

In section III, we show that the number of meshes (measurements, sensors) impact the geographic entropy as will as the RPE. In this section, we investigate the tradeoff between the number of sensors and the REM accuracy with a near precise result.

\newcounter{mytempeqncnt}
\begin{figure*}[ht]
\normalsize

\begin{equation}\label{eq_expectation_of_length_in_mesh}
E[{\xi _i}] = \int_0^{\frac{\pi }{4}} {\left( {\int_0^{\sin \theta } {\left( {x\tan \theta  + x\cot \theta } \right){f_X}(x)dx}  + \int_{\sin \theta }^{\frac{{\sqrt 2 L\sin \left( {\theta  + \frac{\pi }{4}} \right)}}{2}} {\frac{1}{{\cos \theta }}{f_X}(x)dx} } \right){f_\Theta }(\theta )d\theta }
\end{equation}

\begin{equation}\label{eq_expectation_of_RPE}
E[{p_{e,i}}] = \int_0^{\frac{\pi }{4}} {\left( {\int_0^{\sin \theta } {\frac{1}{2}\frac{{{x^2}}}{{\sin \theta \cos \theta }}{f_X}(x)dx}  + \int_{\sin \theta }^{\frac{{\sqrt 2 L\sin \left( {\theta  + \frac{\pi }{4}} \right)}}{2}} {\left( {\frac{x}{{\cos \theta }} - \frac{{\tan \theta }}{2}} \right){f_X}(x)dx} } \right){f_\Theta }(\theta )d\theta }
\end{equation}
\hrulefill \vspace*{4pt}
\end{figure*}

\subsection{One-mesh-one-sensor}
If each mesh contains one sensor, the number of sensors equals the number of meshes.
From Theorem \ref{th_relation}, we have
\begin{equation}
p_e \ge {\psi ^{ - 1}}(H),
\end{equation}
which implies that $p_e$ is lower bounded by a function of entropy ${\psi ^{ - 1}}(H)$.
If ${\psi ^{ - 1}}(H) \ne 0$, then the sensing error can never be reduced to $0$.
To reduce the probability of error requires that ${p_e} \le \beta$.
From Theorem \ref{TH_pe_upper}, we have
\begin{equation}\label{eq_lower_bound_sensors}
\begin{aligned}
 & \frac{1}{{\sqrt M }}\frac{{2\sqrt 2 \xi L}}{S}( {1 - \frac{1}{N}}) \le \beta  \Rightarrow M \ge ( {\frac{{2\sqrt 2 \xi (N - 1)}}{{L N\beta }}} )^2 \buildrel \Delta \over = {M_1}
\end{aligned}
\end{equation}
However, because the upper bounds on $p_{e,i}$ and $K$ in Theorem \ref{TH_pe_upper} are loose,
the bound in (\ref{eq_lower_bound_sensors}) is also loose.
Therefore, we use a probability model to obtain near accurate estimates of $K$ and $p_{e,i}$ as follow

\begin{Theo}
The radio parameter error as a function of $M$ is
\begin{equation}
{p_e} = \kappa \frac{1}{{\sqrt M }}
\end{equation}
where
\begin{equation}
\kappa = \frac{{\pi  + \ln 64}}{{12\pi }}\frac{{\pi \xi }}{{ - 4\sqrt 2 {{\tanh }^{ - 1}}(1 - \sqrt 2 )L}}
\end{equation}
which is a constant determined by the length of the boundaries of all networks $\xi$ and the length of entire area's edge $L$.
\end{Theo}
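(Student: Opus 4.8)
The plan is to bridge the per-mesh error and the aggregate error $p_e$ through a probabilistic model of how a single network boundary cuts one mesh, which is precisely the content of the two displayed integrals \eqref{eq_expectation_of_length_in_mesh} and \eqref{eq_expectation_of_RPE}. Since only the $K$ impure meshes straddling a boundary carry any error, I would begin from
\[
p_e=\frac{1}{M}\sum_{i=1}^{M}p_{e,i}=\frac{K}{M}\,E[p_{e,i}],
\]
where $E[p_{e,i}]$ is the average RPE of a crossed mesh. Everything then reduces to three ingredients: the expected minority-area fraction $E[p_{e,i}]$, a closed form for $K$, and the substitution $\varepsilon=L/\sqrt{M}$.

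First I would fix the local geometry inside a crossed mesh: the boundary is modelled as a straight chord with orientation $\Theta$ and offset $X$, and the fourfold symmetry of the square lets me restrict to $\theta\in[0,\pi/4]$. A case split on the offset is the heart of the computation --- for small $x$ (up to the critical value $x=\sin\theta$) the chord cuts off a corner triangle of area $\tfrac12 x^2/(\sin\theta\cos\theta)$, while for larger $x$ it crosses two opposite edges and cuts off a trapezoidal strip. Because $p_{e,i}=1-\max_j p_{ij}$ is exactly the minority-area fraction, integrating these two cases against the densities $f_\Theta,f_X$ produces the expected RPE \eqref{eq_expectation_of_RPE}; the same two cases, with chord length $x\tan\theta+x\cot\theta$ (triangle) or $1/\cos\theta$ (strip) as integrand, give the expected chord length \eqref{eq_expectation_of_length_in_mesh}. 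Taking $\Theta$ and $X$ uniform (the natural measure on a random line meeting the mesh) and carrying out the integrals yields two constants: the scale-invariant fraction $E[p_{e,i}]=\frac{\pi+\ln 64}{12\pi}$ and $E[\xi_i]=c_1\varepsilon$ with $c_1=\frac{-4\sqrt2\,\tanh^{-1}(1-\sqrt2)}{\pi}$, the latter scaling linearly with the mesh side.

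Next I would invoke conservation of boundary length: the total network-boundary length $\xi$ is split among the $K$ crossed meshes, each contributing $E[\xi_i]$ on average, so $K=\xi/E[\xi_i]=\xi/(c_1\varepsilon)$. Substituting this and $\varepsilon=L/\sqrt{M}$ into the opening identity gives
\[
p_e=\frac{K}{M}\,E[p_{e,i}]=\frac{E[p_{e,i}]}{c_1}\,\frac{\xi}{L}\,\frac{1}{\sqrt{M}},
\]
and collecting the two evaluated constants reproduces $\kappa=\frac{E[p_{e,i}]}{c_1}\frac{\xi}{L}=\frac{\pi+\ln 64}{12\pi}\,\frac{\pi\xi}{-4\sqrt2\,\tanh^{-1}(1-\sqrt2)\,L}$, which is the claimed expression.

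The main obstacle is the honest closed-form evaluation of the two angle-and-offset integrals. The appearance of $\tanh^{-1}(1-\sqrt2)=-\tfrac12\ln(1+\sqrt2)$ and $\ln 64=6\ln 2$ signals $\int\sec\theta\,d\theta$- and $\int\tan\theta\,d\theta$-type antiderivatives evaluated at the symmetry endpoint $\theta=\pi/4$, and keeping the triangle/strip split consistent with the $x$- and $\theta$-limits is where errors creep in. A secondary but conceptual difficulty is justifying the length-conservation identity $K\,E[\xi_i]=\xi$ together with the independence and uniformity of $\Theta$ and $X$; these idealizations --- ignoring curvature of the boundary at the mesh scale and correlations between neighbouring crossed meshes --- are exactly why the theorem is a ``near precise'' equality rather than an exact one.
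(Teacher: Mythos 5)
Your proposal is correct and follows essentially the same route as the paper: the same chord model with uniform $\Theta\in[0,\pi/4]$ and uniform offset $X$, the same triangle/trapezoid case split at $x=\sin\theta$ yielding the constants $E[p_{e,i}]=\frac{\pi+\ln 64}{12\pi}$ and $E[\xi_i]=\frac{-4\sqrt{2}\tanh^{-1}(1-\sqrt{2})}{\pi}\,\varepsilon$, the same length-conservation estimate $K=\xi/E[\xi_i]$, and the same aggregation $p_e=\frac{K}{M}E[p_{e,i}]$ with $\varepsilon=L/\sqrt{M}$. Your closing remarks on the idealizations (straight-chord approximation, LLN-type averaging) also match the paper's justification of the result as a ``near precise'' estimate.
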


\begin{proof}
Fig. \ref{fig_boundary_in_a_mesh} illustrates the boundary of network in a unit mesh $i$ with an impure radio environment.
This boundary can be approximated by a line when $M$ is large.
We ignore the situation where the boundaries of multiple networks cross the mesh, as the probability of this occurring is low when $M$ is large.
The parameters $x$ and $\theta$ determine a line in Fig. \ref{fig_boundary_in_a_mesh}, where $x$ is the distance between vertex $A$ and the
boundary, and $\theta$ is the angle between this line and horizontal line.
Both $x$ and $\theta$ are random variables with probability density functions (PDFs)
\begin{figure}[!t]
\centering
\includegraphics[width=0.4\textwidth]{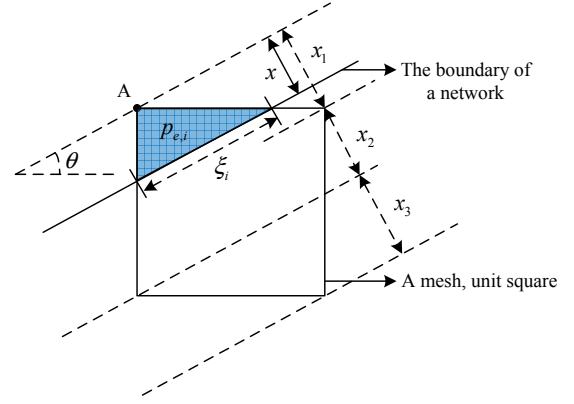}
\caption{A boundary of network cut a mesh.} \label{fig_boundary_in_a_mesh}
\end{figure}
\begin{equation}
{f_\Theta }(\theta ) = \frac{4}{\pi },0 \le \theta  \le \frac{\pi }{4}
\end{equation}
and
\begin{equation}
{f_X}(x) = \frac{2}{{\sqrt 2 L\sin \left( {\theta  + \frac{\pi }{4}} \right)}},0 \le x \le \frac{{\sqrt 2 L\sin \left( {\theta  + \frac{\pi }{4}} \right)}}{2}
\end{equation}
The length of the boundary of network in the mesh is
\begin{equation}
{\xi _i} = \left\{ {\begin{array}{*{20}{c}}
   {x\tan \theta  + x\cot \theta } & {x \le {x_1}}  \\
   {\frac{1}{{\cos \theta }}} & {{x_1} < x < {x_1} + \frac{{{x_2}}}{2}}  \\
\end{array}} \right.
\end{equation}

The radio parameter error is
\begin{equation}
{p_{e,i}} = \left\{ {\begin{array}{*{20}{c}}
   {\frac{{{x^2}}}{{\sin 2\theta }}} & {x \le {x_1}}  \\
   {\frac{x}{{\cos \theta }} - \frac{{\tan \theta }}{2}} & {{x_1} < x < {x_1} + \frac{{{x_2}}}{2}}  \\
\end{array}} \right.
\end{equation}
where $x_1$, $x_2$ and $x_3$ are as shown in Fig. \ref{fig_boundary_in_a_mesh},
with values
\begin{equation}
{x_1} = {x_3} = \sin \theta
\end{equation}
\begin{equation}
{x_2} = {\left[ {\sqrt 2 \sin \left( {\theta  + \frac{\pi }{4}} \right) - 2\sin \theta } \right]^ + }
\end{equation}
with ${[*]^ + } = \max \{ 0,*\}$.
The expected values of $\xi_i$ and $p_{e,i}$ are shown in (\ref{eq_expectation_of_length_in_mesh}) and (\ref{eq_expectation_of_RPE})
at the top of the next page, and the corresponding closed form expressions are
\begin{equation}\label{eq_expectation_xi_i}
E[{\xi _i}] =  - \frac{{4\sqrt 2 {{\tanh }^{ - 1}}(1 - \sqrt 2 )}}{\pi } \cong 0.7935
\end{equation}
and
\begin{equation}\label{eq_expectation_pei}
E[{p_{e,i}}] = \frac{{\pi  + \ln (64)}}{{12\pi }} \cong 0.1937
\end{equation}
where ${\tanh ^{ - 1}}(z)$ is the inverse hyperbolic function defined as ${\tanh ^{ - 1}}(z) = \frac{1}{2}\ln \frac{{1 + z}}{{1 - z}}$.
We determine the value of $K$, i.e., the number of meshes with an impure radio environment, using (\ref{eq_expectation_xi_i}).
If the first $K$ meshes have an impure radio environment, then
\begin{equation}
E[{\xi _i}]\mathop  = \limits^{(a)} \frac{1}{K}\sum\limits_{i = 1}^K {{\xi _i}} \mathop  = \limits^{(b)} \frac{1}{K}\xi
\end{equation}
where $(a)$ is due to the weak Law of Large Numbers (LLN), and $(b)$ is from the fact that the meshes with an impure radio environment cover all the boundaries of networks.
The value of $K$ can be estimated as
\begin{equation}\label{eq_estimation_K}
K = \frac{\xi }{{E[{\xi _i}]}} = \frac{{\pi \xi }}{{ - 4\sqrt 2 {{\tanh }^{ - 1}}(1 - \sqrt 2 )\varepsilon }}
\end{equation}

If the first $K$ meshes have an impure radio environment, then the RPE of the entire region is
\begin{equation}\label{eq_estimation_PRE_whole_region}
{p_e} = \frac{1}{M}\sum\limits_{i = 1}^K {{p_{e,i}}} \mathop  = \limits^{(c)} \frac{K}{M}E[{p_{e,i}}]
\end{equation}
where $(c)$ is due to the LLN.
Substituting the value of $K$ from (\ref{eq_estimation_K}) and the value of $E[{p_{e,i}}]$ from (\ref{eq_expectation_pei})
in (\ref{eq_estimation_PRE_whole_region_final}) gives
\begin{equation}\label{eq_estimation_PRE_whole_region_final}
{p_e} = \frac{1}{{\sqrt M }}\frac{{\pi  + \ln 64}}{{12\pi }}\frac{{\pi \xi }}{{ - 4\sqrt 2 {{\tanh }^{ - 1}}(1 - \sqrt 2 )L}}
\end{equation}
where $\xi$ is the length of all the boundaries of networks, $L$ is the length of the edges of the entire region, and $N$ is the number of radio parameters.
This is a near precise estimate.
\end{proof}

Note that this confirms the result ${p_e} = \Theta \left( {\frac{1}{{\sqrt M }}} \right)$.
Similar to the derivation of (\ref{eq_lower_bound_sensors}), the number of sensors
can be obtained using the more precise estimate of $p_e$ in (\ref{eq_estimation_PRE_whole_region_final}).
From the bound ${p_e} \le \beta$, we have
\begin{equation}\label{eq_M2}
M \ge {\left( {\frac{{(\pi  + \ln 64)\xi }}{{12\left( { - 4\sqrt 2 {{\tanh }^{ - 1}}(1 - \sqrt 2 )} \right)L\beta }}} \right)^2} \buildrel \Delta \over = {M_2}
\end{equation}

\subsection{Random sensor deployment}

Randomly deploying sensors over the entire region is more realistic than one sensor in each mesh.
Suppose there are $J$ sensors and the region is divided into $M$ meshes.
With a uniform deployment, the probability that a sensor falls into mesh $i$ is $\frac{1}{M} \forall i$.
Thus the probability that there are no sensors in mesh $i$ is
\begin{equation}
{p_0} = {\left( {1 - \frac{1}{M}} \right)^J}.
\end{equation}
If $J = kM$, then $\mathop {\lim }\limits_{M \to \infty } {p_0} = {e^{ - k}}$.
Thus, the number of meshes that have no sensors is
\begin{equation}\label{eq_vacant_meshes}
M{p_0} = \frac{M}{{{e^k}}}
\end{equation}
The radio parameters for meshes without a sensor are randomly chosen, so
the maximum error probability for an empty mesh is still ${1 - \frac{1}{N}}$.
An upper bound on the radio parameter error is then given by
\begin{equation}\label{eq_random_deployment_pe}
p_e^* = \frac{1}{M}\sum\limits_{i = 1}^M {{p_{e,i}}}  \le {p_e} + \frac{1}{M}M{p_0}\left( {1 - \frac{1}{N}} \right)
\end{equation}
where $p_e$ is obtained from (\ref{eq_estimation_PRE_whole_region_final}).
Since $p_e^* \le \beta$, we have
\begin{equation}\label{eq_random_sensor_density}
\begin{aligned}
& M \ge {\left( {\frac{{(\pi  + \ln 64)\xi }}{{12( { - 4\sqrt 2 {{\tanh }^{ - 1}}(1 - \sqrt 2 )} )L( {\beta  - {e^{ - k}}( {1 - \frac{1}{N}} )} )}}} \right)^2}\\
& \buildrel \Delta \over = {M_3}
\end{aligned}
\end{equation}

Note that $M_3 > M_2$, i.e., with random deployment the number of sensors required to achieve the same error probability as
the one-mesh-one-sensor scheme is larger.

Finally, we analyze the situation when a mesh contains more than one sensor.
In this case, the mesh is divided into smaller meshes such that each smaller mesh contains one sensor.
From Theorem \ref{th_mesh_division_reduce_entropy}, any division operation will not reduce the geographic entropy and therefore not reduce the error probability.
Thus (\ref{eq_random_deployment_pe}) is still an upper bound on the RPE and (\ref{eq_random_sensor_density}) is still a lower bound on the number of meshes (sensors)
with random deployment.

\begin{figure}[!t]
\centering
\includegraphics[width=0.5\textwidth]{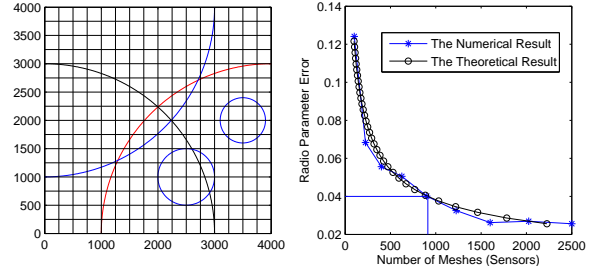}
\caption{The number of sensors vs. the RPE with five networks.} \label{fig_five_nets}
\end{figure}

\begin{figure}[!t]
\centering
\includegraphics[width=0.5\textwidth]{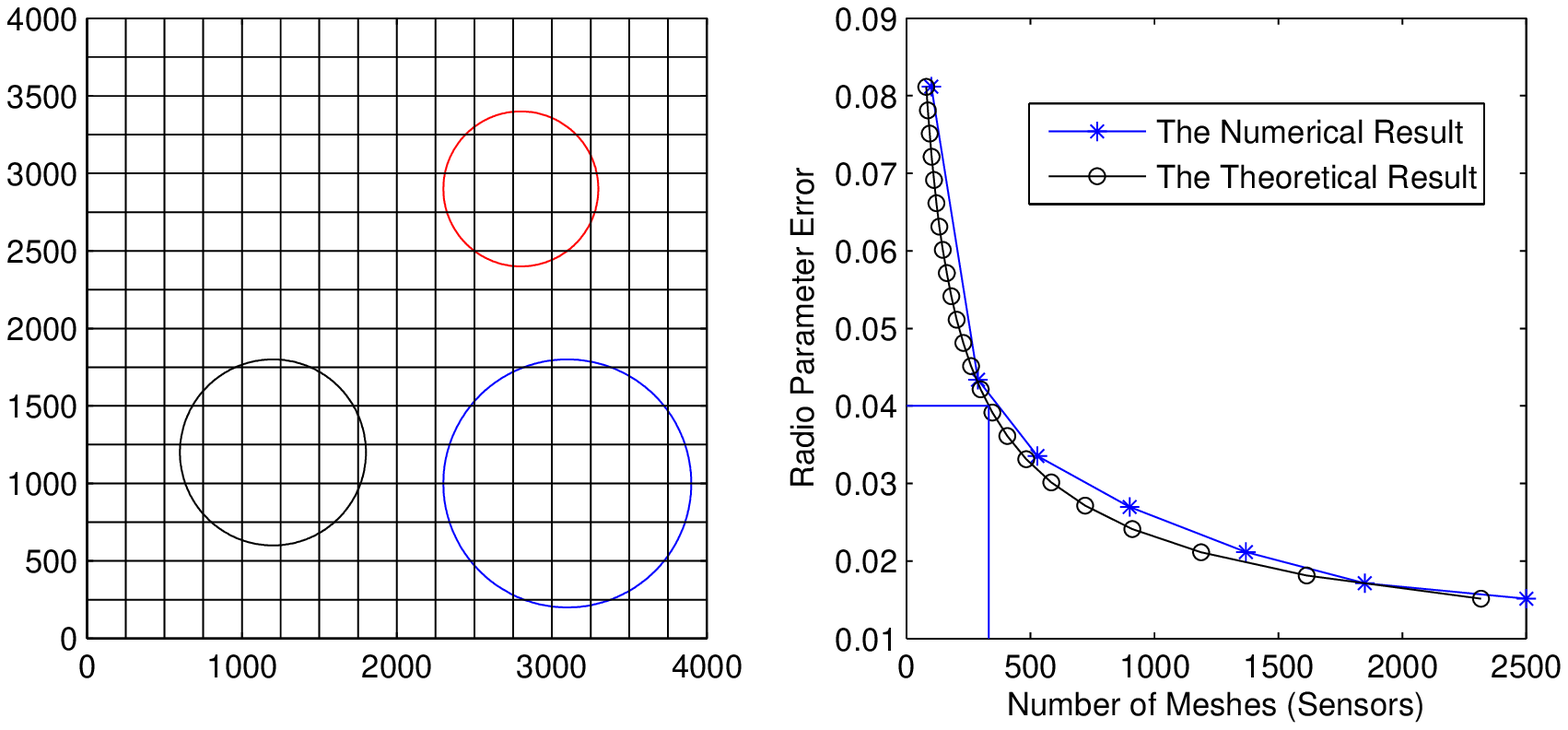}
\caption{The number of sensors vs. the RPE with three networks.} \label{fig_three_nets}
\end{figure}

\section{Numerical Results}

The relationship between the number of meshes (sensors) and the radio parameter error (RPE) in shown in Figs. \ref{fig_five_nets} and \ref{fig_three_nets}
for five and three networks, respectively.
This verifies (\ref{eq_estimation_PRE_whole_region_final}) for the one-mesh-one-sensor scheme.
The RPE is related to the number of sensors $M$ and the length of the network boundaries $\xi$.
As $M$ increases, the RPE decreases, and as $\xi$ increases, the RPE increases.
The value of $\xi$ in Fig. \ref{fig_five_nets} is larger than the corresponding value in Fig. \ref{fig_three_nets}.
Thus to achieve the same RPE (for example, RPE = 0.04), the number of meshes for five networks must be larger than the number with three networks.
Note that the RPE is a convex function of $M$, thus when $M$ is sufficiently large, the RPE improvement is not significant.

From (\ref{eq_M2}) and (\ref{eq_random_sensor_density}), for the same error probability, the number of sensors with the one-mesh-one-sensor scheme
will be smaller that with random sensor deployment.
This is verified by Fig. \ref{fig_sensor_number}.
As $k$ increases, the number of sensors with random deployment approaches the number with the one-mesh-one-sensor scheme.
This results can be obtained from (\ref{eq_random_sensor_density}) as $\mathop {\lim }\limits_{k \to \infty } {M_3} = {M_2}$.

Fig. \ref{fig_picture_REM_construction} provides three examples of REM construction.
In Fig. \ref{fig_picture_REM_construction}($\rm{a_1}$), the REM with the one-mesh-one-sensor scheme has errors along the network boundaries.
This is because the radio environment of the meshes along the boundaries is impure and may contain errors in the measurement results.
The radio parameter error of each mesh is illustrated in Fig. \ref{fig_picture_REM_construction}($\rm{a_2}$).
Figs. \ref{fig_picture_REM_construction}($\rm{b_1}$) and ($\rm{c_1}$) illustrate the REM with random sensor deployment and $k=1$ and $k=2$, respectively.
Note that the results in Fig. \ref{fig_picture_REM_construction}($\rm{c_1}$) are more accurate than in ($\rm{b_1}$).
This is expected since (\ref{eq_random_sensor_density}) indicates that the RPE is a decreasing function of $k$.
Figs. \ref{fig_picture_REM_construction}($\rm{b_2}$) and ($\rm{c_2}$) show the radio parameter error distribution for the REMs in
Figs. \ref{fig_picture_REM_construction}($\rm{b_1}$) and ($\rm{c_1}$). respectively.
A mesh without any sensors is shown in red.
Note that Fig. \ref{fig_picture_REM_construction}($\rm{c_2}$) contains fewer red meshes than Fig. \ref{fig_picture_REM_construction}($\rm{b_2}$).
This confirms (\ref{eq_vacant_meshes}), which indicates that the number of vacant meshes is a decreasing function of $k$.
In Figs. \ref{fig_picture_REM_construction}($\rm{a_2}$), ($\rm{b_2}$) and ($\rm{c_2}$), the meshes that are not red or dark blue
denote meshes that have an incorrect radio parameter.
The number of such meshes is lower in Figs. \ref{fig_picture_REM_construction}($\rm{b_2}$) and ($\rm{c_2}$)
than in Fig. \ref{fig_picture_REM_construction}($\rm{a_2}$).
This indicates that random sensor deployment results in fewer measurement errors, but there are more meshes with no sensors when $k > 1$.

\begin{figure}[!t]
\centering
\includegraphics[width=0.35\textwidth]{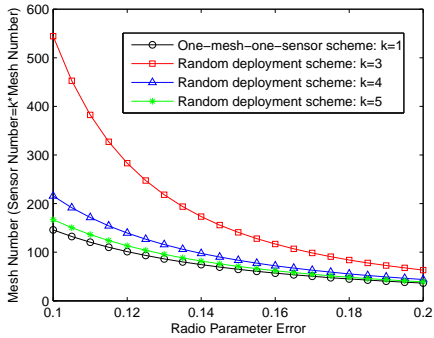}
\caption{The relationship between the number of sensors and the radio parameter error.}
\label{fig_sensor_number}
\end{figure}

\begin{figure}
\centering
\includegraphics[width=0.5\textwidth]{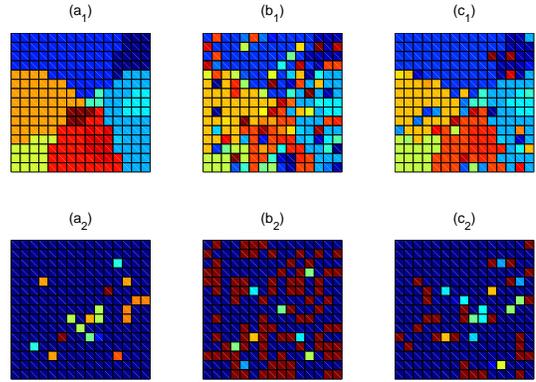}
\caption{REM construction results and the corresponding RPE values for the networks in Fig. \ref{fig_five_nets};
($\rm{a_1}$) corresponds to the one-mesh-one-sensor scheme with $16 \times 16$ meshes (sensors);
($\rm{b_1}$) corresponds to random sensor deployment with $16 \times 16$ meshes and $k=1$;
and ($\rm{c_1}$) corresponds to random sensor deployment scheme with $16 \times 16$ meshes and $k=2$.
The RPE values for ($\rm{a_1}$), ($\rm{b_1}$) and ($\rm{c_1}$) are given in ($\rm{a_2}$), ($\rm{b_2}$) and ($\rm{c_2}$), respectively.} \label{fig_picture_REM_construction}
\end{figure}

\section{Conclusion}

In this paper, we have achieved the relationship between the number of sensors and the radio environment map (REM) accuracy.
The concept of geographic entropy is introduced to quantify this relationship.
And the influence of sensor deployment on REM accuracy is examined using information theory techniques.
The results obtained in this paper are applicable not only for the REM, but also for wireless sensor network deployment.


\begin{thebibliography}{99}

\bibitem{CR_Mitola}
J. Mitola,
{\em Cognitive radio: An integrated agent architecture for software defined radio},
Ph.D. dissertation, KTH Royal Inst. of Technol., Stockholm, Sweden, 2000.

\bibitem{CR_Performance_Evaluation}
Y. Zhao, S. Mao, J. O. Neel, and J. H. Reed,
``Performance evaluation of cognitive radios: Metrics, utility functions, and methodology,''
{\em Proc. IEEE}, vol. 97, no. 4, pp. 642--659, Apr. 2009.

\bibitem{REM_Fast_Algorithm}
S. Grimoud, B. Sayrac, S. Ben Jemaa, and E. Moulines, ``An algorithm for fast REM construction,''
Cognitive Radio Oriented Wireless Networks and Communications (CROWNCOM), pp. 251--255, June 2011.

\bibitem{REM_spatial_statistics}
J. Riihij\"{a}rvi, P. M\"{a}h\"{o}nen, M. Petrova, and V. Kolar,
``Enhancing cognitive radios with spatial statistics: From radio environment maps to topology engine,''
Cognitive Radio Oriented Wireless Networks and Communications (CROWNCOM),
pp. 1--6, June 2009.

\bibitem{REM_Heterogeneous_sensor}
V. Atanasovski et al.,
``Constructing radio environment maps with heterogeneous spectrum sensors,''
IEEE Symp. on New Frontiers in Dynamic Spectrum Access Networks,
pp. 660--661, May 2011.

\bibitem{REM_number_sensor_and_REM}
S. Faint, X. O. \"{U}reten, and T. Willink,
``Impact of the number of sensors on the network cost and accuracy of the radio environment map,''
IEEE 23rd Canadian Conference on Electrical and Computer Engineering (CCECE),
pp. 1--5, May 2010.

\bibitem{CPC_Ondemand}
J. Perez-Romero, O. Salient, R. Agusti, and L. Giupponi,
``A novel on-demand cognitive pilot channel enabling dynamic spectrum allocation,''
IEEE International Symposium on New Frontiers in Dynamic Spectrum Access Networks (DySPAN),
pp. 46--54, Apr. 2007.

\bibitem{Mesh}
Z. Wei and Z. Feng,
``A geographically homogeneous mesh grouping scheme for broadcast cognitive pilot channel in heterogeneous wireless networks,''
IEEE GLOBECOM Workshops, pp. 1008--1012, Dec. 2011.

\bibitem{Mesh_SCI}
Z. Feng, Z. Wei, Q. Zhang and P. Zhang . ``Fractal theory based dynamic mesh grouping scheme for efficient cognitive pilot channel design,'' Chinese Science Bulletin, vol. 57, no. 28-29, pp. 3684--3690, Nov. 2012.

\bibitem{Meir}
M. Feder and N. Merhav,
``Relations between entropy and error probability,''
IEEE Transactions on Information Theory, vol. 40, no. 1, pp. 259--266, Jan. 1994.

\bibitem{Cover}
T. Cover and J. Thomas,
{\em Elements of Informantion Theory},
Wiley, New York, 2006.


\end{thebibliography}
\end{document}